\documentclass[conference]{IEEEtran}
\IEEEoverridecommandlockouts

\usepackage{cite}
\usepackage{amsmath,amssymb,amsfonts, amsthm}
\usepackage{algorithmic}
\usepackage{graphicx}
\usepackage[caption=false]{subfig}
\usepackage{textcomp}
\usepackage{xcolor}
\usepackage{booktabs}
\usepackage{braket}
\usepackage{mathtools}
\usepackage{hyperref}
\usepackage{cleveref}
\usepackage{pgf}
\usepackage{pgfplots}
\pgfplotsset{compat=1.18}

\usepackage{subcaption}
\crefname{section}{Sec.}{Secs.}
\Crefname{section}{Sec.}{Secs.}
\crefname{subsection}{Sec.}{Secs.}
\Crefname{subsection}{Sec.}{Secs.}
\crefname{subsubsection}{Sec.}{Secs.}
\Crefname{subsubsection}{Sec.}{Secs.}
\Crefname{figure}{Fig.}{Figs.}
\Crefname{table}{Tab.}{Tabs.}
\usepackage{quantikz}
\usepackage{orcidlink}
\DeclareMathOperator{\Tr}{tr}

\usepackage{stmaryrd}
\def\BibTeX{{\rm B\kern-.05em{\sc i\kern-.025em b}\kern-.08em
    T\kern-.1667em\lower.7ex\hbox{E}\kern-.125emX}}
\begin{document}

\title{Branch-Aware Quantum Constant Propagation for Dynamic Quantum Circuits\\
}

\author{
\IEEEauthorblockN{Innocenzo Fulginiti \orcidlink{0000-0001-8818-9626}}
\IEEEauthorblockA{\textit{TUM School of CIT} \\
\textit{Technical University of Munich}\\
Garching, Germany \\
innocenzo.fulginiti@tum.de}
\and
\IEEEauthorblockN{Yanbin Chen \orcidlink{0000-0002-1123-1432}}
\IEEEauthorblockA{\textit{TUM School of CIT} \\
\textit{Technical University of Munich}\\
Garching, Germany \\
yanbin.chen@tum.de }
}

\newcommand{\topq}{\top_{\mathfrak{q}}}
\newcommand{\topc}{\top_{\mathfrak{c}}}
\newcommand{\atrans}[2]{\mathfrak{T}^{\sharp}(#1)\left(#2\right)}
\newcommand{\qtrans}[2]{\mathsf{T}^{\sharp,Q}_{#1}\!\left(#2\right)}
\newcommand{\ctrans}[2]{\mathfrak{T}(#1)\!\left(#2\right)}
\newcommand{\qtransc}[2]{\mathsf{T}^{Q}_{#1}\!\left(#2\right)}

\newtheorem{remark}{Remark}
\newtheorem{lemma}{Lemma}
\newtheorem{theorem}{Theorem}

\maketitle

\begin{abstract}
Compile-time optimization is important for improving the efficiency and reliability of quantum circuits on current noisy hardware. While many existing methods simplify circuits using structural patterns or quantum-state information, most of them target only unitary circuits and do not support dynamic circuits with mid-circuit measurements and classical feedforward. In this work, we present Branch-Aware Quantum Constant Propagation (BQCP), a compile-time analysis for dynamic circuits. BQCP extends Quantum Constant Propagation (QCP) by tracking the classical information produced by mid-circuit measurements together with the corresponding post-measurement quantum states across different execution branches. This enables path-sensitive reasoning inside conditional blocks and more precise information propagation than QCP. To keep the analysis scalable, we bound both the size of the quantum-state representation and the number of tracked branches. Using the information inferred by the analysis, we apply semantics-preserving simplifications to circuit operations. We prove the soundness of both the analysis and the simplifications. Experimental results on both application-driven and synthetic benchmarks show that, on dynamic circuits, our method consistently achieves larger reductions than other existing passes including QCP.
\end{abstract}

\begin{IEEEkeywords}
Dynamic quantum circuits, quantum circuit optimization, quantum circuit compilation, static analysis, quantum constant propagation, abstract interpretation
\end{IEEEkeywords}

\section{Introduction}
\label{sec:intro}
Current quantum hardware is highly constrained by noise and decoherence \cite{Preskill2018quantumcomputingin, RevModPhys.94.015004}. Such limitations make it essential to apply optimizations and simplifications to quantum circuits before execution, as even small inefficiencies could worsen the computation's reliability. In this context, compile-time optimization passes play a key role, enabling semantic-preserving simplifications that reduce circuit complexity and improve resource usage during execution.
With the growing availability of quantum programming frameworks and libraries, programmers increasingly rely on high-level language constructs and automated circuit-generation pipelines, including synthesis and general-purpose program transformations, to produce executable circuits from given input specifications.
While these approaches improve abstraction and simplify programmability, they are designed to remain correct across a wide range of inputs and therefore may not produce circuits that are fully optimized for each particular instance. 
As a consequence, there is room for compile-time analyses to identify simplifications that are not apparent from the high-level description.
Over the years, several compile-time optimization techniques have been proposed to simplify quantum circuits, including both rewriting-based techniques and analyses that track properties of the quantum state \cite{assertion-based, rpo, qcp}.
Modern quantum programming languages and SDKs \cite{openqasm3, ibmClassicalFeedforward, quantumaiClassicalControl} support dynamic quantum circuits, where unitary operations are interleaved with mid-circuit measurements, resets, and classical feedforward. Dynamic circuit features are exploited in several application domains, including quantum error correction protocols, qubit reuse techniques, and circuit cutting methods~\cite{exploiting_dyn_circ, qec_1, qubit_reuse_1, qubit_reuse_2, gate_cutting_alg_qdislib}.
However, most existing compile-time optimization passes have limited effectiveness on dynamic circuits, since they are not designed to reason about the dynamic control flow induced by mid-circuit measurements and conditionals \cite{assertion-based, rpo, qcp}.

In this work, we propose a compile-time analysis for dynamic quantum circuits that tracks the classical information generated by mid-circuit measurements and uses it to reason about subsequent control flow. This enables reasoning across the different branches that may arise during circuit execution. Our approach builds on Quantum Constant Propagation (QCP)~\cite{qcp}, a compilation pass that propagates quantum information from a fixed initial state and exploits it to simplify controlled gates by removing redundant controls or eliminating gates whose control conditions are unsatisfiable.
Previous works have also used QCP to rewrite circuits containing mid-circuit measurements and resets into equivalent \emph{probabilistic circuits} \cite{prob, big_prob, fulginiti2026}. However, since they employ original QCP, those approaches do not reason about the different execution branches induced by dynamic circuits.
We introduce \emph{Branch-Aware Quantum Constant Propagation} (BQCP), an extension of QCP for dynamic circuits.
By preserving the correlation between classical measurement results and the corresponding post-measurement quantum states across execution branches, BQCP enables path-sensitive reasoning inside conditional blocks.
Propagating quantum information is inherently challenging because entanglement can induce an exponential growth in the dimension of the quantum state space \cite{PhysRevX.10.041038}.
QCP mitigates this issue by tracking quantum states only up to a fixed size bound, and conservatively stopping the tracking of states that exceeded the bound. In BQCP, an additional source of complexity is the growth in the number of execution paths, which we control through a bounded abstraction that limits both the tracked quantum information and the number of tracked branches, while preserving soundness.
This gives a tunable trade-off between precision and cost that can be adapted to the structure of the analyzed circuits.
The information tracked by our branch-aware analysis allows us to identify optimizations that are not captured by existing approaches \cite{assertion-based, qcp, rpo}. Beyond simplifying controls as in QCP, we exploit propagated classical--quantum information to remove both unitary and non-unitary operations that are redundant on the current abstract state.
Besides directly shrinking circuits, these transformations also simplify subsequent compilation stages, as removing unnecessary multi-qubit gates makes qubit mapping, routing, and decomposition into the target gate set easier~\cite{routing_1, routing_2, Rosa2025optimizinggate, rosa2025quantumgatedecompositionstudy}, resulting in more compact compiled circuits.
We formalize our analysis and prove that it soundly over-approximates the concrete classical--quantum semantics of dynamic circuits, and that all simplifications applied exploiting propagated information preserve program semantics.
We implement and evaluate BQCP on both application-driven and synthetic benchmarks. On a circuit cutting case study, we show that branch-aware propagation exposes optimization opportunities in dynamic circuits that are missed by other optimization passes. To complement this use case, we also compare BQCP against original QCP on a broader and more diverse set of randomly generated dynamic circuits. This second benchmark provides a more systematic assessment of the optimization gains enabled by branch-aware analysis, showing that BQCP consistently achieves larger reductions than QCP.

The implementation of the proposed pass is publicly available at
\url{https://github.com/1nnocenzo/bqcp}.

\section{Preliminaries}
\label{sec:prelim}
This section reviews the background concepts used in the remainder of the paper. We briefly recall QCP, introduce dynamic circuits and their classical–quantum semantics, and present a motivating example illustrating the optimization opportunities enabled by a branch-aware analysis.
\subsection{Quantum Constant Propagation (QCP)}
\label{subsec:prelim-qcp}
QCP is a static analysis method that propagates abstract information about the quantum state along a circuit. Starting from an initial state, it traverses the circuit and updates the abstract state to reflect the effect of each gate \cite{qcp}.
\subsubsection{Quantum abstract state}
\label{prelim:subsec-qas}
Consider a circuit over a quantum register $Q = \{q_0,\dots,q_{n-1}\}$. QCP maintains a partition $\mathcal{G}$ of $Q$ into \textit{entanglement groups} such that, at each program point and for any $q_i,q_j \in Q$,
$q_i \text{ and } q_j \text{ are entangled}$ only if there exists $G \in \mathcal{G}$ such that $q_i,q_j \in G$.
In this paper, we use the refined variant of QCP described in \cite{big_prob}, in which qubits belong to the same group if and only if they are entangled. Accordingly, each group $G \in \mathcal{G}$ is tracked independently: entangling operations merge previously separate groups, whereas disentangling operations refine $\mathcal{G}$ by splitting a group into smaller ones.
Fix, for each group $G$, a canonical ordering of its qubits, so that basis strings range over $\{0,1\}^{|G|}$.
Let the concrete state of $G$ be the pure state
\[
  \ket{\psi_G} = \sum_{x\in\{0,1\}^{|G|}} \alpha_x \ket{x}, \quad \text{with } \sum_x |\alpha_x|^2 = 1.
\]
QCP represents $\ket{\psi_G}$ sparsely as a mapping from basis strings $x$ with $\alpha_x\neq 0$ to their amplitudes $\alpha_x\in\mathbb{C}$.
To ensure polynomial-time analysis and mitigate the exponential growth induced by entanglement,
QCP fixes a threshold $n_{\max}$ on the number of non-zero basis states that can be represented per group.
As long as the support size of $\ket{\psi_G}$ is at most $n_{\max}$, this representation is exact (i.e., it contains all non-zero amplitudes).
If applying an instruction would increase the number of non-zero basis states beyond $n_{\max}$, the group is marked as untracked.
For a group $G\in\mathcal{G}$, let
\[
\mathcal{Q}(G) \triangleq \{\, s:\{0,1\}^{|G|}\rightharpoonup \mathbb{C} \; s.t.\; |\operatorname{dom}(s)| \le n_{\max} \,\} \ \cup\ \{\topq\},
\]
where $\topq$ denotes an untracked group, and $s$ is a finite partial function mapping computational-basis strings to amplitudes. As long as $s\neq \topq$, $s$ stores all and only the non-zero amplitudes of $\ket{\psi_G}$, with $\alpha_x = s(x)$ for $x\in\operatorname{dom}(s)$, and $\alpha_x = 0$ otherwise.
Given a partition $\mathcal{G}$ of the quantum register $Q$, the corresponding abstract domain is defined as
\[
\mathcal{Q}(\mathcal{G}) \triangleq \prod_{G\in\mathcal{G}} \mathcal{Q}(G).
\]
An abstract quantum state is then any element $\gamma_Q \in \mathcal{Q}(\mathcal{G})$.
Since the abstract quantum state \(\gamma_Q\) depends on the partition \(\mathcal{G}\), the abstract domain for \(Q\) is defined as
\[
\mathcal{Q}_Q \triangleq \{\,(\mathcal{G},\gamma_Q) \mid \gamma_Q \in \mathcal{Q}(\mathcal{G})\,\}.
\]
In the remainder of the paper, for \(\gamma_Q \in \mathcal{Q}(\mathcal{G})\) and \(G \in \mathcal{G}\), we write \(\gamma_Q(G)\) for the abstract state associated with \(G\).

\subsubsection{Control reduction}
\label{subsec:prelim-qcp-cont-red}
QCP exploits the current abstract quantum state to simplify controls in controlled gates (e.g., $CZ$, $CCX$). Sufficient conditions include: a control qubit is deterministically in $\ket{0}$, in which case the gate never triggers; and a control qubit is deterministically in $\ket{1}$, in which case the control can be dropped. QCP also detects when the conjunction of control conditions is unsatisfiable, in which case the gate can never be enabled and can be removed.

\subsection{Dynamic quantum circuits}
\label{subsec:prelim-dqc}
A dynamic quantum circuit over an $n$-qubit quantum register $Q=\{q_0,\dots,q_{n-1}\}$ and an $m$-bit classical register
$C=\{c_0,\dots,c_{m-1}\}$ is a sequence of instructions, including, in addition to unitary gates, mid-circuit measurements that write to the classical register, resets, and conditionals, namely \textit{if-then-else} constructs guarded by classical conditions, i.e., predicates over the classical register \cite{openqasm3, ibmClassicalFeedforward, quantumaiClassicalControl}.
\subsubsection*{Concrete classical--quantum semantics}
\label{subsec:pre-qc-sem}
Unitary gates act deterministically on the quantum register, while non-unitary operations such as measurements may produce different outcomes, thereby giving rise to different execution branches. Each branch is identified by a classical register configuration together with the corresponding post-measurement quantum state. The concrete classical--quantum state at a given program point is therefore a finite set of execution branches.
Let $\mathcal{H}_Q = (\mathbb{C}^2)^{\otimes n}$ denote the Hilbert space associated with the $n$-qubit register $Q$, and let $\mathcal{D}(\mathcal{H}_Q)$ denote the set of density operators over $\mathcal{H}_Q$. Let $\{0,1\}^m$ be the set of configurations of the $m$-bit register $C$. The concrete classical--quantum semantics $\mathfrak{S}$ of a dynamic circuit over $Q$ and $C$ is a set
\[
  \mathfrak{S} \;\subseteq\;
  \{0,1\}^m \times \mathcal{D}(\mathcal{H}_Q).
\]
An element $(c,\rho) \in \mathfrak{S}$ represents a reachable execution branch in which $C$ has value $c$ and $Q$ is in state $\rho$.
For the purposes of our static analysis, only the reachability of branches matters, so we ignore their probabilities.

\subsection{Motivating example}
\begin{figure}[t]
\centering
\subfloat[\label{fig:two_circuits:orig}]{%
\resizebox{0.6\linewidth}{!}{%
  \begin{quantikz}[row sep=0.3cm, column sep=0.55cm]
\lstick{$q_0$} & \gate{H} & \meter{c_0}\wire[d][3]{c} &
& \gate{X}\gategroup[wires=3,steps=1,
  style={draw,inner xsep=4pt,inner ysep=0.1pt},
  label style={label position=above}]{\textbf{if}$\,c_0 \land c_1$}
& \gate{Z} \gategroup[wires=3,steps=1,
  style={draw,inner xsep=4pt,inner ysep=0pt, outer ysep=0pt},
  label style={label position=above}]{\textbf{else}}
& \qw & \ctrl{1} &
\\
\lstick{$q_1$} & \gate{X} & \qw & \meter{c_1}\wire[d][2]{c} & \ctrl{1}
& \gate{H}
& \qw & \targ{} &
\\
\lstick{$q_2$} & \qw & \qw & \qw & \targ{} & \qw & \qw & \qw &
\\
\lstick{$c$}   & \setwiretype{c} \cw & \cw & \cw & \cw & \cw & \cw & \cw & \cw
\end{quantikz}
}} \\
\subfloat[\label{fig:two_circuits:simpl}]{%
\resizebox{0.5\linewidth}{!}{%
  \begin{quantikz}[row sep=0.3cm, column sep=0.55cm]
\lstick{$q_0$} & \gate{H} & \meter{c_0}\wire[d][3]{c} &
& \gate{X}\gategroup[wires=3,steps=1,
  style={draw,inner xsep=4pt,inner ysep=0.1pt},
  label style={label position=above}]{\textbf{if}$\,c_0$}
& \qw \gategroup[wires=3,steps=1,
  style={draw,inner xsep=4pt,inner ysep=0pt, outer ysep=0pt},
  label style={label position=above}]{\textbf{else}}
& \qw &
\\
\lstick{$q_1$} & \gate{X} & \qw & \meter{c_1}\wire[d][2]{c} & \qw
& \gate{H}
& \qw &
\\
\lstick{$q_2$} & \qw & \qw & \qw & \gate{X} & \qw & \qw &
\\
\lstick{$c$}   & \setwiretype{c} \cw & \cw & \cw & \cw & \cw & \cw & \cw
\end{quantikz}
}}
\caption{Dynamic circuit example assuming all qubits are initially in $\ket{0}$ and all classical bits are $0$: Fig.~\ref{fig:two_circuits:orig} shows the original circuit, and Fig.~\ref{fig:two_circuits:simpl} shows a semantically equivalent simplified circuit.}
\label{fig:opt_exmp}
\end{figure}
Existing optimization passes such as QCP are designed for circuits with a single linear execution branch and do not reason in terms of the concrete classical--quantum semantics of dynamic circuits illustrated in \Cref{subsec:pre-qc-sem}. In the presence of non-unitary operations and conditionals, they lose information about the state of the involved qubits, thereby limiting the optimization opportunities that can be detected.
Fig.~\ref{fig:opt_exmp} illustrates this situation on a simple dynamic circuit example, assuming the initial configuration in which all qubits are initialized to $\ket{0}$ and all classical bits to $0$. By propagating classical--quantum information and tracking the correlation between the measurement outcomes and the post-measurement quantum state along each execution branch, it is possible to detect that: upon entering the \textit{then} block, the state of $q_1$ is $\ket{1}$, allowing to replace the $CX$ gate inside the \textit{then} block with a non-controlled $X$ gate; upon entering the \textit{else} branch, the state of $q_0$ is $\ket{0}$, then the $Z$ gate produces no effect on it. Moreover, after the \textit{if-then-else} operation, $q_0$ is in state $\ket{0}$ regardless of which branch is taken, therefore, the final $CX$ can be omitted.
Passes not designed for dynamic circuits are unable to detect these simplifications. This motivates the need for a branch-aware analysis pass, which we introduce in the next section.

\section{Method}
\label{sec:method}
We now present Branch-Aware Quantum Constant Propagation (BQCP). We first introduce the abstract domain used to represent classical--quantum states. We then define the abstract transfer functions for the circuit instructions, which update the abstract state. Finally, we show how the propagated information is exploited to simplify dynamic circuits through semantics-preserving optimizations.
\subsection{Abstract state and domain}
\label{sec:method-domains}
We now introduce an abstract semantics for dynamic circuits, designed to model the concrete classical--quantum semantics for dynamic circuits (\Cref{subsec:pre-qc-sem}).
\subsubsection{Classical abstract state}
\label{sec:method-classical-domain}
Consider an $m$-bit classical register $C$. We abstract each bit $c_i \in C$ using the flat lattice
\(
  \mathcal{C} \triangleq \{0,1,\topc\},
\)
where $0$ and $1$ denote known constants, and $\topc$ denotes an unknown value. The partial order $\sqsubseteq_{\mathcal{C}}$ is defined by
\(
  0 \sqsubseteq_{\mathcal{C}} \topc, \,
  1 \sqsubseteq_{\mathcal{C}} \topc,
\)
with $0$ and $1$ incomparable. 
The abstract domain of the $m$-bit classical register $C$ is
\[
  \mathcal{C}_C \triangleq \prod_{i=0}^{m-1} \mathcal{C}.
\]
An abstract classical state is then any element $\gamma_C \in \mathcal{C}_C$.
In the remainder of the paper, we write $\gamma_C(c_i)$ to denote the abstract value of $c_i$ in the abstract state $\gamma_C$.
Given $c_i \in C$ and $v \in \mathcal{C}$, we denote by $\gamma_C[c_i \mapsto v]$ the abstract state obtained from $\gamma_C$ by updating $c_i$ to $v$ and leaving all other bits unchanged.

\subsubsection{Classical--quantum abstract state}
\label{sec:method-cq-domain}
In accordance with the concrete classical--quantum semantics of \Cref{subsec:pre-qc-sem}, we represent the abstract classical--quantum state of a dynamic circuit at a given point during execution as a finite disjunction of abstract branches.
Each branch consists of a classical component and a quantum component. The classical component is an abstract classical state \(\gamma_C \in \mathcal{C}_C\) (\Cref{prelim:subsec-qas}). For the quantum component, we use the QCP abstraction recalled in \Cref{prelim:subsec-qas}, namely, an object \(\eta_Q \in \mathcal{Q}_Q\), where \(\eta_Q \triangleq (\mathcal{G}, \gamma_Q)\). Accordingly, the abstract state of a single branch is represented as a pair
\(
(\gamma_C,\eta_Q) \in \mathcal{C}_C \times \mathcal{Q}_Q.
\)
Formally, the abstract state of a dynamic circuit is a finite disjunction of abstract branches:
\[
\mathcal{B} \triangleq \{\,(\gamma_C^1,\eta_Q^1),\dots,(\gamma_C^k,\eta_Q^k)\,\}
\ \subseteq\ \mathcal{C}_C \times \mathcal{Q}_Q.
\]
Thus, \(\mathcal{B}\) over-approximates the set of possible execution branches that may arise during the dynamic circuit execution.


\subsubsection{Bounded abstract state representation}
\label{sec:method-branch-bound}
To ensure that the analysis remains tractable, we adopt a bounded and conservative representation of the abstract classical--quantum state, with bounds imposed at two levels.
First, for the quantum component of each branch, we use the bounded QCP representation (\Cref{prelim:subsec-qas}): each entanglement group is tracked only while its number of basis states is at most $n_{\max}$, and if this bound is exceeded, the group is conservatively marked as untracked.
Second, we fix a bound $b_{\max}\in\mathbb{N}$ that controls the number of branches represented explicitly during the analysis. As long as the bound permits it, branches are tracked separately. When representing all alternatives explicitly would exceed this threshold, the analysis instead incorporates their effects conservatively through over-approximation, without introducing additional branches. This preserves soundness while allowing a tunable trade-off between precision and scalability. The precise enforcement of these bounds is defined later in the paper through the abstract transfer functions, which specify the evolution of the abstract state across circuit instructions.

\subsubsection{Conservative merge of branches}
\label{sec:method-branch-merge}
We now define a conservative merge operator for combining two abstract branches
\(
A=(\gamma_C^A,\eta_Q^A)\quad\text{and}\quad B=(\gamma_C^B,\eta_Q^B)
\)
in \(\mathcal{C}_C \times \mathcal{Q}_Q\), where
\(\gamma_C^A,\gamma_C^B\in\mathcal{C}_C\) denote their classical components, and
\(\eta_Q^A=(\mathcal{G}^A,\gamma_Q^A)\), \(\eta_Q^B=(\mathcal{G}^B,\gamma_Q^B)\in\mathcal{Q}_Q\)
their quantum components.
\paragraph{Classical component}
We merge the abstract classical state bitwise using the join operator $\sqcup_{\mathcal{C}}$ on $\mathcal{C}$. For $v,w\in\mathcal{C}$, we define $v \sqcup_{\mathcal{C}} w$ to be $v$ if $v=w$, and $\topc$ otherwise.
The merged classical state $\gamma_C^{A\sqcup B}$ is defined by
\[
  \gamma_C^{A\sqcup B}(c_i) \ \triangleq \ \gamma_C^A(c_i)\ \sqcup_{\mathcal{C}}\ \gamma_C^B(c_i),
  \quad i\in\{0,\dots,m-1\}.
\]




\paragraph{Quantum component}
To merge the quantum component, we preserve only those entanglement groups
\(G\) that appear in both partitions \(\mathcal{G}^A\) and \(\mathcal{G}^B\).
For every such group \(G\in\mathcal{G}^A\cap\mathcal{G}^B\), the merged abstract state is defined as
\[\gamma_Q^{A\sqcup B}(G) \triangleq \gamma_Q^A(G)\ \sqcup_{\mathcal{Q}}\ \gamma_Q^B(G),\]
where for \(x,y\in\mathcal{Q}(G)\) we define \(x \sqcup_{\mathcal{Q}} y \triangleq x\) if \(x=y\), and \(\topq\) otherwise.
All remaining qubits, namely those that do not belong to any preserved group, lose all tracked quantum information in the merged branch.
For representational purposes, these qubits are collected into a residual group whose abstract value is \(\topq\).
The merged partition \(\mathcal{G}^{A\sqcup B}\) consists of all entanglement groups \(G \in \mathcal{G}^A \cap \mathcal{G}^B\), possibly augmented with a residual group collecting all remaining qubits.
The merged quantum component is
\(\eta_Q^{A\sqcup B}\triangleq(\mathcal{G}^{A\sqcup B},\gamma_Q^{A\sqcup B})\).

\paragraph*{Branch merge operator} The operator for merging two abstract branches is defined as
\[A \sqcup B \triangleq (\gamma_C^{A\sqcup B},\eta_Q^{A\sqcup B}).\]
\subsection{Evaluation of classical guards}
\label{sec:method-guards}
Dynamic circuits include guarded constructs, where the guard is a predicate over the classical register \(C\). During propagation, the guard is evaluated on the abstract classical state \(\gamma_C\) to determine how control flow may proceed.

\subsubsection{Guard definition}
\label{sec:method-guards-lang}
We consider guards $\varphi$ defined as
\[
\varphi \;::=\; c_i \mid \neg\varphi \mid \varphi \land \varphi \mid \varphi \lor \varphi,
\]
where $c_i\in C$ is a bit, and $\land, \lor, \neg$ are boolean operators.

\subsubsection{Guard abstract evaluation}
\label{sec:method-guards-eval}
We evaluate guards using a three-valued logic:
\(
\mathbb{B}_3 \triangleq \{\mathsf{t},\mathsf{f},\mathsf{u}\},
\)
where \(\mathsf{t}\), \(\mathsf{f}\), and \(\mathsf{u}\) denote definitely true, definitely false, and unknown, respectively.
For each guard \(\varphi\), we define an abstract evaluation function
\[
\llbracket \varphi \rrbracket^{\sharp} : \mathcal{C}_C \to \mathbb{B}_3,
\]
which evaluates \(\varphi\) on an abstract classical state \(\gamma_C \in \mathcal{C}_C\).
For \(c_i \in C\), \(\llbracket c_i \rrbracket^{\sharp}(\gamma_C)\) is \(\mathsf{t}\) if \(\gamma_C(c_i)=1\), \(\mathsf{f}\) if \(\gamma_C(c_i)=0\), and \(\mathsf{u}\) if \(\gamma_C(c_i)=\topc\).
The boolean operators \(\neg\), \(\land\), \(\lor\) are interpreted over \(\mathbb{B}_3\) to account for unknown values in the abstract state.
For negation, we set \(\neg\mathsf{t}=\mathsf{f}\), \(\neg\mathsf{f}=\mathsf{t}\), \(\neg\mathsf{u}=\mathsf{u}\).
For conjunction, \(b_1\land b_2\) is \(\mathsf{f}\) if either operand is \(\mathsf{f}\), it is \(\mathsf{t}\) if both are \(\mathsf{t}\), and it is \(\mathsf{u}\) otherwise.
Dually, \(b_1\lor b_2\) is \(\mathsf{t}\) if either operand is \(\mathsf{t}\), it is \(\mathsf{f}\) if both are \(\mathsf{f}\), and it is \(\mathsf{u}\) otherwise.
The abstract evaluation of compound guards, namely \(\llbracket \neg\varphi \rrbracket^{\sharp}\), \(\llbracket \varphi_1\land\varphi_2 \rrbracket^{\sharp}\), and \(\llbracket \varphi_1\lor\varphi_2 \rrbracket^{\sharp}\), is then defined compositionally.

\subsection{Abstract transfer functions}
\label{sec:method-transfer}
To propagate abstract classical--quantum information through a dynamic circuit, we associate each circuit instruction \(I\) with an abstract transfer function \(\atrans{I}{\mathcal{B}}\), where \(\mathcal{B}\subseteq \mathcal{C}_C\times\mathcal{Q}_Q\) is the current abstract state. Applying \(\atrans{I}{\mathcal{B}}\) updates each branch of \(\mathcal{B}\) according to the semantics of \(I\), yielding the abstract state after executing \(I\).
Moreover, transfer functions enforce the bounds \(n_{\max}\) and \(b_{\max}\) to control the growth of the abstract state.

\subsubsection{Transfer function for unitary operations}
Let \(U\) be a unitary gate. Since unitary instructions do not act on the classical register, they preserve the classical abstraction of each branch and only update the quantum abstraction using the underlying QCP transfer function~\cite{qcp}. We denote by \(\tau^\sharp_U : \mathcal{Q}_Q \to \mathcal{Q}_Q\) the corresponding quantum transfer function.
Given \(\eta_Q=(\mathcal{G},\gamma_Q)\), \(\tau^\sharp_U(\eta_Q)\) updates the entanglement partition
\(\mathcal{G}\) and the abstract quantum state \(\gamma_Q\), enforcing the bound \(n_{\max}\) by setting to \(\topq\)
any group whose tracked support would exceed \(n_{\max}\) \cite{qcp}.
In our analysis, the transfer function for a unitary instruction \(U\) is defined over an abstract state
\(\mathcal{B}\subseteq \mathcal{C}_C\times\mathcal{Q}_Q\).
It acts branchwise by preserving the classical component of each branch and applying \(\tau^\sharp_U\) to its quantum component. Formally, we define
\[
  \atrans{U}{\mathcal{B}}
  \;\triangleq\;
  \{\, (\gamma_C,\tau^{\sharp}_{U}(\eta_Q)) \mid (\gamma_C,\eta_Q)\in\mathcal{B}\,\}.
\]

\subsubsection{Transfer function for measurements}
\label{sec:method-transfer-measure}
Let $M_{q_j \to c_i}$ be the instruction that measures the qubit $q_j\in Q$ and stores
the outcome in the classical bit $c_i\in C$.
To define the measurement transfer function, we first describe how a measurement \(M_{q_j \to c_i}\) updates a single abstract branch $(\gamma_C,(\mathcal{G},\gamma_Q))$, distinguishing deterministic from non-deterministic measurements.
\paragraph{Deterministic measurement}
Assume that \(q_j\) is not entangled with any other qubit in the current branch, so that its entanglement group is \(\{q_j\}\) and \(\gamma_Q(\{q_j\}) = s \neq \topq\), where \(s:\{0,1\}\rightharpoonup\mathbb{C}\) maps basis strings to amplitudes (\cref{subsec:prelim-qcp}).
If there exists \(b\in\{0,1\}\) such that \(|s(b)|=1\) and \((1-b)\notin\operatorname{dom}(s)\), then the measured qubit \(q_j\) is already in the basis state \(\ket{b}\).
Hence, the measurement is deterministic, introduces no branching, and leaves the quantum state unchanged.
It only updates the classical state by writing \(b\) to \(c_i\), yielding the updated branch \((\gamma_C[c_i\mapsto b],(\mathcal{G},\gamma_Q))\).

\paragraph{Non-deterministic measurement}
When the measurement is not deterministic, either because the tracked
state admits multiple outcomes or because the measured qubit is untracked,
we generate two successor branches, one for each outcome \(b\in\{0,1\}\).
In each successor branch, the classical abstraction is updated by setting $c_i$ to $b$.
Moreover, the measurement disentangles $q_j$ from the rest of its entanglement group.
Let $G\in\mathcal{G}$ be the group such that $q_j\in G$.
The partition is then refined by replacing $G$ with ${q_j}$ and $G\setminus{q_j}$, omitting the latter when it is empty.
The singleton group ${q_j}$ is set to represent the basis state $\ket{b}$.
If $\gamma_Q(G)\neq\topq$, the abstract state of the remaining qubits $G\setminus{q_j}$ is updated by conditioning on the outcome $b$: we keep only the amplitudes of the pre-measurement superposition in which $q_j=b$, discard the others, and renormalize the result.
If instead $\gamma_Q(G)=\topq$, no conditional post-measurement state can be computed for $G\setminus{q_j}$, which is therefore kept untracked.

\paragraph*{Transfer function definition}
Let \(\mathcal{B}\subseteq \mathcal{C}_C\times\mathcal{Q}_Q\) be the current abstract state and consider a measurement \(M_{q_j \to c_i}\).
For a branch \((\gamma_C,(\mathcal{G},\gamma_Q)) \in \mathcal{B}\), let
\(\mathsf{S}^{\sharp}(\gamma_C,(\mathcal{G},\gamma_Q))\) denote the set of successor branches obtained by applying \(M_{q_j \to c_i}\) according to the deterministic and non-deterministic cases defined above.
Hence, \(|\mathsf{S}^{\sharp}(\gamma_C,(\mathcal{G},\gamma_Q))|\in\{1,2\}\).
While a deterministic measurement does not increase the number of abstract branches, as the current branch is simply replaced by its updated successor, a non-deterministic measurement produces two successor branches and therefore increases the number of represented branches by one.
If such a split would cause the global bound \(b_{\max}\) to be exceeded, then the two successor branches are replaced by a single conservative fallback branch defined as
\[
\mathsf{FB}^{\sharp}(\gamma_C,(\mathcal{G},\gamma_Q))
\triangleq
\{(\gamma_C[c_i\mapsto \topc],\ (\mathcal{G},\gamma_Q[G\mapsto \topq]))\},
\]
where \(G\in\mathcal{G}\) is the entanglement group such that \(q_j\in G\).
Accordingly, the transfer function \(\atrans{M_{q_j \to c_i}}{\mathcal{B}}\) is defined by processing the branches of \(\mathcal{B}\) in any fixed order while respecting the global bound \(b_{\max}\).
Let \(r \triangleq b_{\max}-|\mathcal{B}|\) and initialize \(\mathcal{B}'\triangleq\emptyset\).
For each branch \((\gamma_C,(\mathcal{G},\gamma_Q))\in\mathcal{B}\), let
\(S\triangleq \mathsf{S}^{\sharp}(\gamma_C,(\mathcal{G},\gamma_Q))\).
If \(|S|=2\) and \(r=0\), we set \(S \gets \mathsf{FB}^{\sharp}(\gamma_C,(\mathcal{G},\gamma_Q))\), if \(|S|=2\) and \(r>0\), set \(r\gets r-1\).
The resulting set \(S\) is then added to \(\mathcal{B}'\), i.e., \(\mathcal{B}' \gets \mathcal{B}' \cup S\).
The transfer function for measurements is defined as
\[
\atrans{M_{q_j \to c_i}}{\mathcal{B}} \triangleq \mathcal{B}'.
\]

\subsubsection{Transfer function for resets}
\label{sec:method-transfer-reset}
Resets are non-unitary operations that reinitialize a qubit to the state $\ket{0}$.
In the concrete semantics, when the reset qubit is entangled, resetting it breaks the entanglement and the
remaining qubits entangled with the reset qubit are left in a mixed state that depends on the pre-reset state, while leaving the classical state unchanged.
Let $R_{q_j}$ denote the instruction that resets the qubit $q_j$.
For a branch $(\gamma_C,(\mathcal{G},\gamma_Q))$, let $G\in\mathcal{G}$ be the entanglement group such that
$q_j\in G$.
Let \(s_0 : \{0,1\} \rightharpoonup \mathbb{C}\) denote the abstract state such that \(s_0(0)=1\) and undefined elsewhere. If \(G=\{q_j\}\), then we set the abstract state of the singleton group \(\{q_j\}\) to \(s_0\), while leaving the qubit partition unchanged.
If $|G|>1$, the reset breaks the entanglement between $q_j$ and the other qubits in $G$.
We replace $G$ in the partition with two groups, $\{q_j\}$ and $G\setminus\{q_j\}$, obtaining a refined
partition $\mathcal{G}'$.
In our abstract semantics, the information
about $G\setminus\{q_j\}$ is conservatively discarded by setting $\gamma_Q'(G\setminus\{q_j\}) = \topq$, while $\gamma_Q'(\{q_j\})$ is set to $s_0$.
Let $\mathcal{B}$ be the current abstract state and consider a reset $R_{q_j}$. The transfer function for resets is defined as
\[
\begin{aligned}
  \atrans{R_{q_j}}{\mathcal{B}}
  \;\triangleq\;
  \{&\,(\gamma_C,(\mathcal{G}',\gamma_Q')) \mid\
      (\gamma_C,(\mathcal{G},\gamma_Q))\in\mathcal{B} \\
  & \wedge\ (\mathcal{G}',\gamma_Q')=\mathsf{RU}_{q_j}(\mathcal{G},\gamma_Q)\,\}.
\end{aligned}
\]
Here \(\mathsf{RU}_{q_j}\) updates the quantum component according to the reset semantics described above.

\subsubsection{Transfer function for if-then-else operations}
\label{sec:method-transfer-ifthenelse}
Conditional operations execute instruction blocks according to the evaluation of a classical condition $\varphi$ over the classical register $C$. Given a conditional instruction
$\mathbf{if}\ (\varphi)\ \mathbf{then}\ P_t\ \mathbf{else}\ P_f$
and an incoming abstract state
$\mathcal{B}\subseteq \mathcal{C}_C\times\mathcal{Q}_Q$,
our abstract semantics partitions $\mathcal{B}$ into the branches that may reach the \emph{then} block and the branches that may reach the \emph{else} block, based on the abstract guard evaluation
$\llbracket \varphi \rrbracket^{\sharp} : \mathcal{C}_C \to \mathbb{B}_3$
defined in \cref{sec:method-guards-eval}.
Accordingly, we define
\[
  \mathcal{B}_{\mathsf{t}}
  \;\triangleq\;
  \{\,(\gamma_C,\eta_Q)\in\mathcal{B}\ \mid\
     \llbracket \varphi \rrbracket^{\sharp}(\gamma_C)\in\{\mathsf{t},\mathsf{u}\}\,\},
\]
\[
  \mathcal{B}_{\mathsf{f}}
  \;\triangleq\;
  \{\,(\gamma_C,\eta_Q)\in\mathcal{B}\ \mid\
     \llbracket \varphi \rrbracket^{\sharp}(\gamma_C)\in\{\mathsf{f},\mathsf{u}\}\,\}.
\]
$\mathcal{B}_{\mathsf{t}}$ collects the branches in which the \emph{then} block may be taken, i.e., those for which the guard evaluates to $\mathsf{t}$ or $\mathsf{u}$. Dually, $\mathcal{B}_{\mathsf{f}}$ collects the branches in which the \emph{else} block may be taken, i.e., those for which the guard evaluates to $\mathsf{f}$ or $\mathsf{u}$.

\paragraph{Propagation through the two blocks}
After partitioning the incoming abstract state $\mathcal{B}$ into the two branch sets
$\mathcal{B}_{\mathsf{t}}$ and $\mathcal{B}_{\mathsf{f}}$, we analyze the two sides of the conditional
independently, propagating the \emph{then} block $P_t$ from $\mathcal{B}_{\mathsf{t}}$ and the \emph{else}
block $P_f$ from $\mathcal{B}_{\mathsf{f}}$.
Both $P_t$ and $P_f$ are instruction sequences, denoted by
$P_t = (I^t_1,\dots,I^t_{L_t})$ and $P_f = (I^f_1,\dots,I^f_{L_f})$. The abstract effect of each block is obtained by composing the transfer functions
of its instructions:
\begin{align*}
    \mathfrak{T}^{\sharp}(P_t)(\mathcal{B}_{\mathsf{t}})
    &\;\triangleq\;
    \mathfrak{T}^{\sharp}(I^t_{L_t})(
      \cdots \mathfrak{T}^{\sharp}(I^t_2)(
      \mathfrak{T}^{\sharp}(I^t_1)(\mathcal{B}_{\mathsf{t}})
      )\cdots
    ),
\\
    \mathfrak{T}^{\sharp}(P_f)(\mathcal{B}_{\mathsf{f}})
    &\;\triangleq\;
    \mathfrak{T}^{\sharp}(I^f_{L_f})(
      \cdots \mathfrak{T}^{\sharp}(I^f_2)(
      \mathfrak{T}^{\sharp}(I^f_1)(\mathcal{B}_{\mathsf{f}})
      )\cdots
    ).
\end{align*}
The \emph{then} and \emph{else} blocks are propagated independently. Since $\mathcal{B}_{\mathsf{t}},\mathcal{B}_{\mathsf{f}}\subseteq \mathcal{B}$ and $|\mathcal{B}|\le b_{\max}$, each block starts from at most $b_{\max}$ branches. Moreover, each instruction transfer function enforces the bound $n_{\max}$ branchwise, and the transfer function for measurements enforces the bound $b_{\max}$ as in \cref{sec:method-transfer-measure}. Therefore, every intermediate abstract state produced during the propagation of either block remains bounded by $b_{\max}$.

\paragraph{Abstract join after the conditional}
The abstract state at the join point of the conditional is obtained as the union of the two successor sets:
\[
  \mathcal{B}_{\mathsf{join}}
  \;\triangleq\;
  \atrans{P_t}{\mathcal{B}_{\mathsf{t}}} \ \cup\  \atrans{P_f}{\mathcal{B}_{\mathsf{f}}}.
\]
Even though each side is individually bounded by $b_{\max}$, their union may contain up to $2 \times b_{\max}$ branches.
To maintain the invariant $|\mathcal{B}|\le b_{\max}$, we apply a post-join branch-reduction operator
$\mathsf{RD}_{b_{\max}}$ defined as follows:
\[
  \mathsf{RD}_{b_{\max}}(\mathcal{B})
  \;\triangleq\;
  \begin{cases}
    \mathcal{B} & \text{if } |\mathcal{B}|\le b_{\max},\\
    \mathcal{B}' & \text{otherwise.}
  \end{cases}
\]
Here $\mathcal{B}'$ is obtained from $\mathcal{B}$ by repeatedly selecting two branches
$A,B\in\mathcal{B}$ in any fixed deterministic order and replacing them with their conservative merge
$A\sqcup B$ from \cref{sec:method-branch-merge}, until the cardinality drops to $b_{\max}$.
Since $A\sqcup B$ conservatively over-approximates the disjunction of $A$ and $B$, iterating this reduction
preserves soundness while trading precision.

\paragraph*{Transfer function definition}
The transfer function for a $I = \mathbf{if}(\varphi)\ \mathbf{then}\ P_t\ \mathbf{else}\ P_f$ instruction is
\[
\begin{aligned} \atrans{I}{\mathcal{B}} 
  \;\triangleq\; 
  \mathsf{RD}_{b_{\max}}
  (
    \mathcal{B}_\mathsf{join}
  ).
\end{aligned}
\]

\subsection{Circuit simplification rules using abstract states}
\label{sec:method-simpl}
We now show how the propagated information can be used to simplify dynamic circuits while preserving
their concrete classical--quantum semantics. These simplifications are driven by the branch-disjunctive
abstract state computed at each program point. In the presence of \textit{if-then-else}
constructs, the branch-aware propagation enables block-local simplifications inside the conditional. Moreover, the join point after the conditional retains a sound over-approximation of the information carried by
the incoming branches, so simplification can continue after the join as well.
\subsubsection{Control reduction under branching}
\label{sec:method-optim-control}
We revisit control reduction introduced in QCP (\cref{subsec:prelim-qcp-cont-red}) in the context of our branch-aware analysis.
In our setting, the abstract state is a finite set of execution branches, and control conditions are evaluated independently on each branch. To preserve soundness, a control can be dropped only if it is redundant in every branch of the current abstract state $\mathcal{B}$, and a controlled operation can be removed only if its controls are unsatisfiable in every branch.
\Cref{fig:exmp_toff_red} shows an example of how BQCP enables control reductions that QCP cannot derive.
\begin{figure}[t]
\centering
\subfloat[\label{fig:exmp_toff_red_orig}]{%
\resizebox{0.37\linewidth}{!}{%
\begin{quantikz}[row sep=0.3cm, column sep=0.5cm]
\lstick{$\ket{0}$} & \gate{H} & \ctrl{1} & \meter{c_0}\wire[d][3]{c} & \qw & \ctrl{2} & \qw \\
\lstick{$\ket{0}$} & \gate{X} & \targ{} & \qw &  \meter{c_1}\wire[d][2]{c} & \ctrl{1} & \qw \\
\lstick{} & \qw & \qw & \qw & \qw & \targ{} & \qw \\
\lstick{$c$} & \setwiretype{c} \cw & \cw & \cw & \cw & \cw & \cw
\end{quantikz}
}}
\qquad
\subfloat[\label{fig:exmp_toff_red_simp}]{%
\resizebox{0.34\linewidth}{!}{%
\begin{quantikz}[row sep=0.3cm, column sep=0.5cm]
\lstick{$\ket{0}$} & \gate{H} & \ctrl{1} & \meter{c_0}\wire[d][3]{c} & \qw & \qw \\
\lstick{$\ket{0}$} & \gate{X} & \targ{}  & \qw & \meter{c_1}\wire[d][2]{c} & \qw \\
\lstick{} & \qw & \qw & \qw & \qw & \qw \\
\lstick{$c$} & \setwiretype{c} \cw & \cw & \cw & \cw & \cw
\end{quantikz}
}}
\caption{Example of BQCP eliminating an unreachable Toffoli gate: \Cref{fig:exmp_toff_red_orig} shows the original circuit, while \Cref{fig:exmp_toff_red_simp} shows the simplified circuit. The first two qubits are prepared in the Bell state $(\ket{01}+\ket{10})/\sqrt{2}$, so after measurement they are never simultaneously equal to $1$. Hence, the Toffoli control condition is unsatisfiable in every execution branch.}
\label{fig:exmp_toff_red}
\end{figure}
\subsubsection{Elimination of semantically redundant operations}
\label{sec:method-el-red-op}
We use the abstract state at each program point to detect instructions that have no effect on the current
classical--quantum state of the circuit. An instruction can be removed whenever the abstract state is precise enough to establish that it is semantically redundant.
\paragraph{Redundant unitaries}
Let $U$ be a unitary gate acting on the qubits $S\subseteq Q$. For each branch
$(\gamma_C,(\mathcal{G},\gamma_Q))\in\mathcal{B}$, let
$(\mathcal{G}',\gamma_Q') \triangleq \tau^\sharp_U((\mathcal{G},\gamma_Q))$, where
$\tau^\sharp_U : \mathcal{Q}_Q \to \mathcal{Q}_Q$ is the QCP transfer function
(\cref{sec:method-transfer}).
If the abstract state is precise enough to establish that $U$ is semantically redundant in every branch of $\mathcal{B}$, we can remove it. A sufficient condition is that, for every branch, all entanglement groups intersecting $S$ are tracked and left unchanged by $U$, i.e.,
\[
  \forall\,G\in\mathcal{G}, G\cap S\neq\emptyset.\;
  G\in\mathcal{G}' \wedge \gamma_Q(G)\neq\topq \wedge \gamma_Q'(G)=\gamma_Q(G).
\]
Intuitively, for each branch, the restriction of the partition to the qubits in $S$ is unchanged and the abstract state of every affected group is preserved. Hence, $U$ leaves the concrete quantum state unchanged in every execution represented by that branch. When this holds for all branches in $\mathcal{B}$, we say that $U$ is redundant under $\mathcal{B}$, and in that case, $U$ can be removed.

\paragraph{Redundant resets} A reset operation $R_{q_j}$ is redundant when the qubit $q_j$ is tracked as the deterministic basis state $\ket{0}$. Formally, for an abstract program state $\mathcal{B}$, we remove $R_{q_j}$ if, for every branch
$(\gamma_C,(\mathcal{G},\gamma_Q))\in\mathcal{B}$, the entanglement group containing $q_j$ is the singleton
$\{q_j\}$ and
\(
  \gamma_Q(\{q_j\}) = s_0,
\)
where $s_0$ is the abstract encoding of $\ket{0}$ (i.e., $s_0(0)=1$ and
undefined elsewhere). In this case, executing $R_{q_j}$ leaves the concrete quantum state unchanged.
\paragraph{Redundant measurements}
A measurement operation $M_{q_j \to c_i}$ is redundant under $\mathcal{B}$ when, in every execution represented by $\mathcal{B}$, it leaves the concrete classical--quantum state unchanged. Formally, given an abstract program state
$\mathcal{B}\subseteq \mathcal{C}_C\times\mathcal{Q}_Q$,
we remove $M_{q_j \to c_i}$ if, for every branch
$(\gamma_C,(\mathcal{G},\gamma_Q))\in\mathcal{B}$, the entanglement group containing $q_j$ is the singleton
$\{q_j\}$ and the state of $q_j$ is precisely tracked as a deterministic basis state $\ket{b}$ for some
$b\in\{0,1\}$, with the target classical bit already equal to $b$, i.e., $\gamma_C(c_i)=b$.
In this case, the measurement of $q_j$ does not change the concrete classical--quantum state and is therefore semantically redundant.

\subsubsection{Optimizations for conditionals}
\label{sec:method-optim-if}
Consider a conditional instruction
$\mathbf{if}\ (\varphi)\ \mathbf{then}\ P_t\ \mathbf{else}\ P_f$
analyzed under an abstract program state $\mathcal{B}$. Using the abstract guard
evaluation $\llbracket \varphi \rrbracket^{\sharp}$, we partition $\mathcal{B}$ into $\mathcal{B}_{\mathsf{t}}$ and $\mathcal{B}_{\mathsf{f}}$ (\cref{sec:method-transfer-ifthenelse}),
representing the branches that may reach the \emph{then} and \emph{else} blocks,
respectively. We exploit $\mathcal{B}_{\mathsf{t}}$ and $\mathcal{B}_{\mathsf{f}}$  to simplify both the conditional
and the instructions within its two blocks.

\paragraph{Eliminating deterministic conditionals}
If $\mathcal{B}_{\mathsf{f}}=\emptyset$, then the guard cannot evaluate to false on any branch compatible with
$\mathcal{B}$, and the conditional can be replaced by the \emph{then} block $P_t$. Symmetrically, if
$\mathcal{B}_{\mathsf{t}}=\emptyset$, only the \emph{else} block $P_f$ is reachable and the conditional can be
replaced by $P_f$. Eliminating such deterministic conditionals simplifies the dynamic circuit by removing
runtime control-flow decisions: the circuit no longer needs to evaluate $\varphi$ and select between $P_t$ and
$P_f$ during execution.
This not only simplifies the dynamic circuit itself, but also improves the effectiveness of subsequent compilation passes, which can then operate on unconditional blocks instead of reasoning about conditional control flow.
\paragraph{Block-local simplification}
When both $\mathcal{B}_{\mathsf{t}}$ and $\mathcal{B}_{\mathsf{f}}$ are non-empty, the \emph{then} and \emph{else} blocks are analyzed independently under the branch-restricted states $\mathcal{B}_{\mathsf{t}}$ and $\mathcal{B}_{\mathsf{f}}$, respectively. This preserves the correlation between the guard outcome and the corresponding
classical--quantum information, allowing each block to be simplified using only the branches that may reach it. In particular, we simplify $P_t$ using $\mathcal{B}_{\mathsf{t}}$, and we simplify $P_f$ using $\mathcal{B}_{\mathsf{f}}$. This enables optimizations that are local to one side of the conditional, such as control reduction or the elimination of semantically redundant operations.
\Cref{fig:exmp_contr_red} illustrates a block-local simplification enabled by our branch-aware analysis.
BQCP determines that the control qubit is in state $\ket{1}$ in the \emph{if} block and in state $\ket{0}$ in the \emph{else} block. Hence, the control on $U_1$ is redundant in the former, whereas the control on $U_2$ can never be triggered in the latter. QCP, which does not track multiple branches, conservatively retain both controlled operations.
\begin{figure}[t]
\centering
\subfloat[\label{fig:exmp_contr_red_orig}]{%
\resizebox{0.355\linewidth}{!}{%
  \begin{quantikz}[row sep=0.3cm, column sep=0.55cm]
\lstick{$\ket{+}$} & \meter{c_0}\wire[d][2]{c} & \ctrl{1}\gategroup[wires=2,steps=1,
  style={draw,inner xsep=4pt,inner ysep=0.1pt},
  label style={label position=above}]{\textbf{if}$\,c_0$} & \ctrl{1}\gategroup[wires=2,steps=1,
  style={draw,inner xsep=4pt,inner ysep=0.1pt},
  label style={label position=above}]{\textbf{else}} &
\\
\lstick{} & \qw & \gate{U_1} & \gate{U_2} &
\\
\lstick{$c$}   & \setwiretype{c} \cw & \cw & \cw &
\end{quantikz}
}}
\qquad
\subfloat[\label{fig:exmp_contr_red_semp}]{%
\resizebox{0.28\linewidth}{!}{%
  \begin{quantikz}[row sep=0.3cm, column sep=0.55cm]
\lstick{$\ket{+}$} & \meter{c_0}\wire[d][2]{c} & \qw \gategroup[wires=2,steps=1,
  style={draw,inner xsep=4pt,inner ysep=0.1pt},
  label style={label position=above}]{\textbf{if}$\,c_0$} &
\\
\lstick{} & \qw & \gate{U_1} &
\\
\lstick{$c$}   & \setwiretype{c} \cw & \cw &
\end{quantikz}
}}
\caption{Example of a block-local simplification with BQCP: \Cref{fig:exmp_contr_red_orig} shows the original circuit, while \Cref{fig:exmp_contr_red_semp} shows the corresponding simplified circuit.}
\label{fig:exmp_contr_red}
\end{figure}

\section{Correctness and Complexity}
\label{sec:correctness-complexity}
In this section, we discuss the correctness and complexity of the proposed method.
\subsection{Correctness}
\label{sec:correctness-repr}
We establish the soundness of our abstract state propagation with respect to the concrete classical--quantum semantics defined in \Cref{subsec:pre-qc-sem}. We also show that our simplification rules preserve circuit semantics.
\subsubsection{Correctness of state propagation}
We define a representation relation $\models$ between concrete classical--quantum states $\mathfrak{S}$ and abstract program states $\mathcal{B}$, written $\mathfrak{S} \models \mathcal{B}$, to express that $\mathcal{B}$ soundly over-approximates all concrete execution branches in $\mathfrak{S}$.
The soundness of abstract propagation is proved by induction on program instructions. Specifically, for every instruction $I$, if $\mathfrak{S} \models \mathcal{B}$ holds at a program point, then the abstract transfer function for $I$ soundly over-approximates the concrete semantics of $I$, i.e.,
\(
  \ctrans{I}{\mathfrak{S}} \models \atrans{I}{\mathcal{B}}.
\)
Here $\ctrans{I}{\mathfrak{S}}$ denotes the concrete transition induced by $I$ on the concrete state $\mathfrak{S}$, while $\atrans{I}{\mathcal{B}}$ denotes the corresponding abstract transfer function defined in \Cref{sec:method-transfer}.


For an abstract classical state \(\gamma_C\in\mathcal{C}_C\), we define the set of compatible
concrete classical configurations
\[
  \Gamma_C(\gamma_C)
  \triangleq
  \{\, c \in \{0,1\}^m \mid
     \forall i.\ \gamma_C(c_i) \neq \topc \Rightarrow c_i = \gamma_C(c_i)
  \,\}.
\]

Given $\eta_Q=(\mathcal{G},\gamma_Q)\in\mathcal{Q}_Q$. For each tracked group
$G\in\mathcal{G}$ with $\gamma_Q(G)=s\neq \topq$, let $\ket{\psi_s}$ be the normalized state encoded by $s$
and $\rho_s \triangleq \ket{\psi_s}\!\bra{\psi_s}$. The set of compatible concrete quantum configurations is
\begin{align*}
\Gamma_Q(\eta_Q)
\triangleq
\{ \rho \in \mathcal{D}(\mathcal{H}_Q) \,|\, \forall G\in\mathcal{G}.\gamma_Q(G)=s \Rightarrow
\mathfrak{t}_G(\rho)=\rho_s \}.
\end{align*}
Here
$\mathfrak{t}_G(\rho) \triangleq \Tr_{Q\setminus G}(\rho)$ is the partial trace of $\rho$ over the complement of $G$. In other words, whenever $\gamma_Q(G)$ is tracked (i.e., $\gamma_Q(G)=s\neq \topq$), the reduced state of $\rho$
on $G$ must coincide with $\rho_s$.
Since $\rho_s$ is pure, $\mathfrak{t}_G(\rho)=\rho_s$ implies that $G$ is uncorrelated with $Q\setminus G$.

A single abstract branch $(\gamma_C,\eta_Q)\in \mathcal{C}_C\times\mathcal{Q}_Q$ represents the set of
compatible concrete configurations
\[
  \Gamma_{CQ}(\gamma_C,\eta_Q)
  \triangleq
  \{\, (c,\rho)\mid c\in \Gamma_C(\gamma_C)\ \wedge\ \rho\in \Gamma_Q(\eta_Q)\,\}.
\]
The set of concrete classical-quantum states represented by an abstract program state $\mathcal{B}\subseteq\mathcal{C}_C\times \mathcal{Q}_Q$ is
\[
  \Gamma(\mathcal{B})
  \triangleq
  \bigcup_{(\gamma_C,\eta_Q)\in\mathcal{B}} \Gamma_{CQ}(\gamma_C,\eta_Q).
\]
$\mathcal{B}$ represents $\mathfrak{S}$, written $\mathfrak{S}\models \mathcal{B}$, if
\(
  \mathfrak{S} \subseteq \Gamma(\mathcal{B}).
\)
When $\mathcal{B}=\emptyset$, we have $\Gamma(\mathcal{B})=\emptyset$, so $\mathfrak{S}\models\mathcal{B}$
holds if and only if $\mathfrak{S}=\emptyset$.


We now state some lemmas that will be used to prove the soundness of propagation.
\begin{lemma}[Soundness of abstract guard evaluation]
\label{lem:guard-sound}
Let $\llbracket \varphi \rrbracket(c)\in\{t,f\}$ denote the concrete Boolean evaluation of $\varphi$ on $c$.
For every guard $\varphi$, abstract state $\gamma_C\in\mathcal{C}_C$, and concrete configuration
$c\in\Gamma_C(\gamma_C)$, the following implications hold: \(
\llbracket \varphi \rrbracket^{\sharp}(\gamma_C)=\mathsf{t} \Rightarrow \llbracket \varphi \rrbracket(c)=t\), \(\
\llbracket \varphi \rrbracket^{\sharp}(\gamma_C)=\mathsf{f} \Rightarrow \llbracket \varphi \rrbracket(c)=f.
\)
\end{lemma}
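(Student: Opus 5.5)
We prove both implications simultaneously by structural induction on the guard $\varphi$, fixing $\gamma_C$ and $c\in\Gamma_C(\gamma_C)$ throughout. It is convenient to state the induction hypothesis as a single claim: for every subguard $\psi$ of $\varphi$, if $\llbracket \psi \rrbracket^{\sharp}(\gamma_C)\neq\mathsf{u}$ then $\llbracket \psi \rrbracket^{\sharp}(\gamma_C)$ coincides with $\llbracket \psi \rrbracket(c)$ under the identification $\mathsf{t}\leftrightarrow t$, $\mathsf{f}\leftrightarrow f$. The two implications of the lemma are exactly this claim for $\psi=\varphi$.

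\emph{Base case.} Let $\varphi=c_i$. If $\llbracket c_i\rrbracket^{\sharp}(\gamma_C)=\mathsf{t}$, then $\gamma_C(c_i)=1\neq\topc$. By the definition of $\Gamma_C$, the $i$-th bit of $c$ is therefore $1$, so $\llbracket c_i\rrbracket(c)=t$. The case $\mathsf{f}$ is symmetric with $\gamma_C(c_i)=0$.

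\emph{Inductive cases.} For $\neg\varphi'$, a definite value $\mathsf{t}$ (resp.\ $\mathsf{f}$) arises only if $\llbracket\varphi'\rrbracket^{\sharp}(\gamma_C)=\mathsf{f}$ (resp.\ $\mathsf{t}$), since $\neg\mathsf{u}=\mathsf{u}$. The hypothesis on $\varphi'$ then gives the concrete value, and concrete negation flips it.

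For $\varphi_1\land\varphi_2$, we inspect the three-valued table.
\begin{itemize}
\item The result is $\mathsf{t}$ only when both operands are $\mathsf{t}$. By induction both concrete values are $t$, so the concrete conjunction is $t$.
\item The result is $\mathsf{f}$ when at least one operand, say $\varphi_k$, is $\mathsf{f}$. By induction $\llbracket\varphi_k\rrbracket(c)=f$, and the concrete conjunction is $f$ regardless of the other operand, even if that operand is $\mathsf{u}$.
\end{itemize}
The disjunction case is dual: $\mathsf{t}$ arises from one $\mathsf{t}$ operand, and $\mathsf{f}$ arises from two $\mathsf{f}$ operands.

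The only point needing care is that the induction hypothesis says nothing about operands evaluating to $\mathsf{u}$. The argument must therefore rely only on the operands that are definite. The case analysis above does exactly this, because the three-valued operators return a definite value only when the definite operands alone determine the concrete Boolean result. This is essentially the monotonicity of Kleene's strong three-valued logic with respect to refinement of $\mathsf{u}$, and it is the one step worth stating explicitly. Beyond that, no real obstacle is expected.
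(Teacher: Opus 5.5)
Your proposal is correct and takes the same approach as the paper, whose proof is a one-line structural induction on $\varphi$ using the definitions of $\Gamma_C$ and $\llbracket\cdot\rrbracket^{\sharp}$. You fill in the base case and the $\neg$, $\land$, $\lor$ cases exactly as that argument intends, including the correct handling of operands that evaluate to $\mathsf{u}$.
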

\begin{proof}
By structural induction on $\varphi$, using the definition of $\Gamma_C$ and $\llbracket \cdot \rrbracket^{\sharp}$.
\end{proof}

\begin{lemma}[Soundness of branch merge]
\label{lem:branch-merge-sound}
Let $A=(\gamma_C^A,\eta_Q^A)$ and $B=(\gamma_C^B,\eta_Q^B)$ be two abstract branches, and let
$A\sqcup B=(\gamma_C^{A\sqcup B},\eta_Q^{A\sqcup B})$ be their conservative merge. Then
\[
\Gamma_{CQ}(\gamma_C^A,\eta_Q^A)\ \cup\ \Gamma_{CQ}(\gamma_C^B,\eta_Q^B)
\ \subseteq\
\Gamma_{CQ}(\gamma_C^{A\sqcup B},\eta_Q^{A\sqcup B}).
\]
\end{lemma}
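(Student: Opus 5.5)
By symmetry it suffices to show $\Gamma_{CQ}(\gamma_C^A,\eta_Q^A)\subseteq\Gamma_{CQ}(\gamma_C^{A\sqcup B},\eta_Q^{A\sqcup B})$; the inclusion for $B$ follows by exchanging the roles of $A$ and $B$, because both $\sqcup_{\mathcal{C}}$ and $\sqcup_{\mathcal{Q}}$ are symmetric and $\mathcal{G}^A\cap\mathcal{G}^B$ is symmetric. So fix $(c,\rho)$ with $c\in\Gamma_C(\gamma_C^A)$ and $\rho\in\Gamma_Q(\eta_Q^A)$. Since $\Gamma_{CQ}$ is a product of the classical and quantum concretizations, I will check the two components separately.

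\textbf{Classical component.} Fix a bit $c_i$ with $\gamma_C^{A\sqcup B}(c_i)\neq\topc$. By the definition of $\sqcup_{\mathcal{C}}$ this happens only when $\gamma_C^A(c_i)=\gamma_C^B(c_i)=\gamma_C^{A\sqcup B}(c_i)$. Because $c\in\Gamma_C(\gamma_C^A)$, we get $c_i=\gamma_C^A(c_i)=\gamma_C^{A\sqcup B}(c_i)$, and hence $c\in\Gamma_C(\gamma_C^{A\sqcup B})$. Every other bit is $\topc$ in the merge and imposes no constraint.

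\textbf{Quantum component.} The merged partition $\mathcal{G}^{A\sqcup B}$ consists of the groups $G\in\mathcal{G}^A\cap\mathcal{G}^B$, possibly together with a residual group $R$ whose value is $\topq$. The definition of $\Gamma_Q$ only constrains groups whose value is not $\topq$, so $R$ contributes no condition. Now take a preserved group $G\in\mathcal{G}^A\cap\mathcal{G}^B$ with $\gamma_Q^{A\sqcup B}(G)=s\neq\topq$. By the definition of $\sqcup_{\mathcal{Q}}$, this forces $\gamma_Q^A(G)=\gamma_Q^B(G)=s$. Since $G\in\mathcal{G}^A$ and $\rho\in\Gamma_Q(\eta_Q^A)$, we have $\mathfrak{t}_G(\rho)=\rho_s$. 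Therefore $\rho\in\Gamma_Q(\eta_Q^{A\sqcup B})$.

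The only point needing care is the quantum component. The constraint for $G$ depends solely on the partial trace of $\rho$ over $Q\setminus G$, which is a function of $G$ as a set of qubits. So it does not matter that $\mathcal{G}^A$ contains additional groups, nor that the merged partition differs from $\mathcal{G}^A$. The residual group has to be treated as imposing nothing, and the definition of $\Gamma_Q$ already guarantees this.

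Combining the two components gives $(c,\rho)\in\Gamma_{CQ}(\gamma_C^{A\sqcup B},\eta_Q^{A\sqcup B})$, which proves the lemma.
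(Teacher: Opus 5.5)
Your proof is correct and follows the same route as the paper's. You check the classical inclusion bitwise from the definition of $\sqcup_{\mathcal{C}}$. For the quantum inclusion, you note that every non-$\topq$ group of the merge lies in $\mathcal{G}^A\cap\mathcal{G}^B$ and carries the same $s$ in both branches, so its constraint $\mathfrak{t}_G(\rho)=\rho_s$ is already imposed by $\eta_Q^A$ (and by $\eta_Q^B$), while the residual group imposes nothing. Your remark that this constraint depends only on $G$ as a set of qubits, so the differing partitions are harmless, makes explicit a point the paper leaves implicit.
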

\begin{proof}
\emph{Classical part}: by definition of $\gamma_C^{A\sqcup B}$, a bit is kept constant only when both branches agree,
and it is set to $\topc$ otherwise. Hence
$\Gamma_C(\gamma_C^A)\subseteq \Gamma_C(\gamma_C^{A\sqcup B})$ and similarly for $B$.
\emph{Quantum part:} by construction, $\eta_Q^{A\sqcup B}$ preserves a constraint on a group $G$ only when
$G$ occurs identically in both $A$ and $B$, and $\gamma_Q^A(G)=\gamma_Q^B(G)=s\neq \topq$. In this case,
$\gamma_Q^{A\sqcup B}(G)=s$, so the same constraint $\mathfrak{t}_G(\rho)=\rho_s$ is preserved in the merge.
All remaining qubits are placed in a (possibly empty) additional group whose abstract state is $\topq$, and therefore no constraint is imposed on them. Thus, every constraint present in $\eta_Q^{A\sqcup B}$ is already satisfied by any $\rho\in\Gamma_Q(\eta_Q^A)$, and similarly by any $\rho\in\Gamma_Q(\eta_Q^B)$. Hence
$\Gamma_Q(\eta_Q^A)\subseteq \Gamma_Q(\eta_Q^{A\sqcup B})$ and
$\Gamma_Q(\eta_Q^B)\subseteq \Gamma_Q(\eta_Q^{A\sqcup B})$.
\end{proof}

\begin{lemma}[Soundness of branch reduction]
\label{lem:reduce-sound}
For any abstract state $\mathcal{B}$, letting $\mathsf{RD}_{b_{\max}}(\mathcal{B})=\mathcal{B}'$, we have
\(
  \Gamma(\mathcal{B}) \subseteq \Gamma(\mathcal{B}').
\)
\end{lemma}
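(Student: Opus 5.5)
The plan is to distinguish the two cases in the definition of $\mathsf{RD}_{b_{\max}}$ and, in the nontrivial case, to argue by induction on the number of merge steps performed.

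If $|\mathcal{B}|\le b_{\max}$, then $\mathcal{B}'=\mathcal{B}$ and the inclusion $\Gamma(\mathcal{B})\subseteq\Gamma(\mathcal{B}')$ is trivial.

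Otherwise, $\mathcal{B}'$ is reached through a finite sequence $\mathcal{B}=\mathcal{B}_0,\mathcal{B}_1,\dots,\mathcal{B}_k=\mathcal{B}'$. Each step $\mathcal{B}_{\ell+1}$ is obtained from $\mathcal{B}_\ell$ by choosing $A,B\in\mathcal{B}_\ell$ with $A\neq B$ and setting
\[
\mathcal{B}_{\ell+1}=(\mathcal{B}_\ell\setminus\{A,B\})\cup\{A\sqcup B\}.
\]
The sequence is finite: each step lowers the cardinality by one, or by two if $A\sqcup B$ already lies in $\mathcal{B}_\ell\setminus\{A,B\}$. The procedure stops once the cardinality is at most $b_{\max}$. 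Because $\mathcal{B}$ is finite, $k$ is finite.

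The key step is to show $\Gamma(\mathcal{B}_\ell)\subseteq\Gamma(\mathcal{B}_{\ell+1})$ for every $\ell$. Since $\Gamma$ is a union over branches, this gives
\[
\Gamma(\mathcal{B}_\ell)=\bigcup_{X\in\mathcal{B}_\ell\setminus\{A,B\}}\Gamma_{CQ}(X)\ \cup\ \Gamma_{CQ}(A)\cup\Gamma_{CQ}(B).
\]
By \Cref{lem:branch-merge-sound}, $\Gamma_{CQ}(A)\cup\Gamma_{CQ}(B)\subseteq\Gamma_{CQ}(A\sqcup B)$. The remaining union $\bigcup_{X\in\mathcal{B}_\ell\setminus\{A,B\}}\Gamma_{CQ}(X)$ is contained in $\Gamma(\mathcal{B}_{\ell+1})$ because those branches are kept unchanged. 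The set-union collapse noted above does no harm, since a duplicate contributes the same $\Gamma_{CQ}$.

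Chaining these inclusions by transitivity then gives $\Gamma(\mathcal{B})\subseteq\Gamma(\mathcal{B}')$.

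There is no real obstacle here, since the substantive content is \Cref{lem:branch-merge-sound}. The only care needed is in the bookkeeping of the set representation: merged branches may coincide with existing ones, so cardinality can drop by more than one. This affects only termination and the final count, not the inclusion.
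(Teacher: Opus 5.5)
Your proposal is correct and follows the paper's argument: the paper's proof observes that $\mathsf{RD}_{b_{\max}}$ is a sequence of pairwise replacements by conservative merges and concludes by repeated application of Lemma~\ref{lem:branch-merge-sound}. Your step-wise inclusion $\Gamma(\mathcal{B}_\ell)\subseteq\Gamma(\mathcal{B}_{\ell+1})$, chained by transitivity, together with the trivial case $|\mathcal{B}|\le b_{\max}$ and your remark on duplicate branches collapsing in the set, spells out bookkeeping the paper leaves implicit.
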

\begin{proof}
The operator $\mathsf{RD}_{b_{\max}}$ is defined by repeated replacement of two branches with their conservative
merge. The claim then follows by repeated application of \Cref{lem:branch-merge-sound}.
\end{proof}

\begin{lemma}[Soundness of instruction transfer functions]
\label{lem:instr-sound}
Let $I$ be any instruction. If $\mathfrak{S}\models\mathcal{B}$, then
\(
  \ctrans{I}{\mathfrak{S}} \models \atrans{I}{\mathcal{B}}.
\)
\end{lemma}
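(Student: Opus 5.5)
We argue by case analysis on the form of $I$, with an outer structural induction on the nesting depth of conditionals so that the \emph{then} and \emph{else} blocks may be handled by applying the lemma instruction by instruction. In every case we fix a concrete branch $(c,\rho)\in\mathfrak{S}$. Since $\mathfrak{S}\subseteq\Gamma(\mathcal{B})$, we pick an abstract branch $(\gamma_C,(\mathcal{G},\gamma_Q))\in\mathcal{B}$ with $c\in\Gamma_C(\gamma_C)$ and $\rho\in\Gamma_Q(\mathcal{G},\gamma_Q)$. We then show that every concrete successor of $(c,\rho)$ under $\ctrans{I}{\cdot}$ lies in $\Gamma_{CQ}$ of some branch of $\atrans{I}{\mathcal{B}}$. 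Because both the concrete and abstract semantics act branchwise (up to the bound enforcement), this local argument suffices.

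\emph{Unitaries.} The classical part is untouched, so $c\in\Gamma_C(\gamma_C)$ persists. For the quantum part we rely on the soundness of the QCP transfer function $\tau^\sharp_U$ from \cite{qcp,big_prob}. If each tracked group's reduced state equals $\rho_s$, then after applying $U$ the groups not touching $U$ keep their reduced states. The merged or split groups are computed exactly, or else set to $\topq$, which imposes no constraint. Purity of $\rho_s$ guarantees that tracked groups are product factors, so applying $U$ to the union of the groups it touches is well defined.

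\emph{Resets.} For the singleton case, the reduced state becomes $\ket0\!\bra0$, which matches $s_0$. For $|G|>1$, $q_j$ again becomes $\ket0$ and is a product factor, while $G\setminus\{q_j\}$ carries $\topq$. All other groups are unaffected because the operation acts locally on $G$.

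\emph{Measurements.} In the deterministic case, $\mathfrak{t}_{\{q_j\}}(\rho)=\ket b\!\bra b$. The only concrete outcome is therefore $b$, $\rho$ is unchanged, and $c[c_i\mapsto b]\in\Gamma_C(\gamma_C[c_i\mapsto b])$. In the non-deterministic case, for outcome $b$ the post-measurement state has $q_j$ in $\ket b$ and is product with the rest. If $\gamma_Q(G)=s$, then, because $G$ is a pure product factor, the conditioned state of $G\setminus\{q_j\}$ is exactly the renormalized restriction of $s$ to $q_j=b$. Other groups are unaffected, since the projector acts only on $G$. Finally, when the bound forces the fallback $\mathsf{FB}^\sharp$, both successors are covered: $c_i\mapsto\topc$ and $G\mapsto\topq$ only drop constraints. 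We must also check that the counter $r$ never deletes branches. This holds by construction, since every branch contributes either $S$ or the fallback.

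\emph{Conditionals.} Take $(c,\rho)$ covered by abstract branch $A$. If the guard is concretely true, \Cref{lem:guard-sound} gives $\llbracket\varphi\rrbracket^\sharp(\gamma_C)\neq\mathsf{f}$, so $A\in\mathcal{B}_{\mathsf t}$. Hence $\mathfrak{S}_t\models\mathcal{B}_{\mathsf t}$, where $\mathfrak{S}_t$ is the set of concrete branches taking the \emph{then} block; the \emph{else} case is symmetric. Applying the induction hypothesis along the instruction sequences $P_t$ and $P_f$ gives $\ctrans{P_t}{\mathfrak{S}_t}\models\atrans{P_t}{\mathcal{B}_{\mathsf t}}$, and likewise for $P_f$. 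The concrete result is the union of these two sets, which is covered by $\mathcal{B}_{\mathsf{join}}$. \Cref{lem:reduce-sound} then transfers this coverage through $\mathsf{RD}_{b_{\max}}$.

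The main obstacle is the quantum part of the unitary and measurement cases. There one must verify that constraints on reduced states of \emph{other} groups survive, and that the conditioned state of $G\setminus\{q_j\}$ is exact. The key fact for both, which we would state as an auxiliary observation first, is that a pure reduced state $\rho_s$ on $G$ forces $\rho=\rho_s\otimes\sigma$. Operations supported on $G$ then affect only the first factor, and operations disjoint from $G$ preserve it.
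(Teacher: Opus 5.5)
Your proposal is correct and follows the paper's proof essentially step for step. Like the paper, you do a case analysis on $I$: soundness of the QCP transformer for unitaries, the deterministic, split and fallback cases for measurements, the singleton versus $|G|>1$ cases for resets, and Lemma~\ref{lem:guard-sound} plus induction over the two blocks and Lemma~\ref{lem:reduce-sound} for conditionals. Your explicit use of the fact that a pure reduced state forces $\rho=\rho_s\otimes\sigma$, and your induction on the nesting depth of conditionals, only make precise what the paper leaves implicit.
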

\begin{proof}
By cases on $I$:
\begin{itemize}
  \item \emph{Unitary $U$.}
  The transfer function applies the QCP transformer $\tau_U^\sharp$ to each branch.
  Since $U$ does not write the classical register, each branch preserves $\gamma_C$.
  Soundness of the quantum update follows directly from the soundness of $\tau_U^\sharp$~\cite{qcp}, which
  over-approximates $\rho\mapsto U\rho U^\dagger$ on tracked groups and conservatively maps
  groups to $\topq$ whenever information is not tracked or the $n_{\max}$ bound is exceeded. Hence, every
  concrete successor $(c,U\rho U^\dagger)$ belongs to $\Gamma(\atrans{I}{\mathcal{B}})$.

  \item \emph{Measurement $M_{q_j\to c_i}$.}
  For each branch, the abstract transfer function applies the measurement semantics locally. If the measurement is deterministic, i.e., the measured qubit is already in a basis state $\ket{b}$, the transfer function tracks $c_i\gets b$, and leaves the quantum state unchanged. Otherwise, the measurement is non-deterministic. If $b_{\max}$ is not exceeded, the transfer function creates the two successor branches corresponding to $b\in\{0,1\}$. In each branch, the classical bit is set to $c_i=b$. If the group containing $q_j$ is tracked, its abstract state is refined to reflect the outcome $q_j=b$; otherwise, the measured qubit is recorded as $\ket{b}$ while the remaining qubits in the group remain untracked. If splitting would exceed $b_{\max}$, the fallback branch sets $c_i\mapsto\topc$ and marks the group containing $q_j$ as untracked ($\topq$), thereby forgetting the outcome and any post-measurement constraint on that group. Hence, $\ctrans{M_{q_j\to c_i}}{\mathfrak{S}} \models \atrans{M_{q_j\to c_i}}{\mathcal{B}}$.

  \item \emph{Reset $R_{q_j}$.}
  For each branch, we apply $\mathsf{RU}_{q_j}$. Let
  $G$ be the entanglement group containing $q_j$. If $G=\{q_j\}$ we overwrite its state to represent $\ket{0}$; if $|G|>1$
  we split $G$ into $\{q_j\}$ and $G\setminus\{q_j\}$, set $\{q_j\}$ to represent $\ket0$, and set $G\setminus\{q_j\}$ to
  $\topq$, thus dropping any constraint on the remaining qubits. In both cases, the resulting abstract quantum state soundly
over-approximates the corresponding concrete post-reset states. Since a reset does not alter $\gamma_C$, we conclude 
  $\ctrans{R_{q_j}}{\mathfrak{S}} \models \atrans{R_{q_j}}{\mathcal{B}}$.

  \item \emph{Conditional $\mathbf{if}(\varphi)\ \mathbf{then}\ P_t\ \mathbf{else}\ P_f$.}
  Consider a concrete branch $(c,\rho)$ represented by some $(\gamma_C,\eta_Q)\in\mathcal{B}$, so
  $c\in\Gamma_C(\gamma_C)$. If $\llbracket\varphi\rrbracket(c)=t$, then by
  Lemma~\ref{lem:guard-sound} we cannot have $\llbracket\varphi\rrbracket^\sharp(\gamma_C)=\mathsf{f}$, hence
  $(\gamma_C,\eta_Q)\in\mathcal{B}_{\mathsf{t}}$; symmetrically, if $\llbracket\varphi\rrbracket(c)=f$
  then $(\gamma_C,\eta_Q)\in\mathcal{B}_{\mathsf{f}}$. Thus, each concrete branch belongs to the abstract branch set corresponding to the outcome of the guard.
  Soundness for $P_t$ and $P_f$ follows by induction over their instruction sequences. Finally, the concrete
  semantics at the join is the union of the two continuations, and the abstract join takes the union and then
  applies $\mathsf{RD}$; soundness of $\mathsf{RD}$ follows from Lemma~\ref{lem:reduce-sound}.
\end{itemize}
\end{proof}

\begin{theorem}[Soundness of state propagation]
\label{thm:prop-sound}
Let $\mathcal{B}_0$ be the initial abstract state and $\mathfrak{S}_0$ the initial concrete state, with
$\mathfrak{S}_0\models \mathcal{B}_0$. For any instruction sequence (program) $P$,
\[
  \ctrans{P}{\mathfrak{S}_0}\ \models\ \atrans{P}{\mathcal{B}_0}.
\]
\end{theorem}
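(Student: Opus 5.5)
The proof is an induction on the length $L$ of the instruction sequence $P=(I_1,\dots,I_L)$, with Lemma~\ref{lem:instr-sound} as the inductive step. First I make explicit that both semantics of a sequence are defined compositionally. The abstract side is
\[
\atrans{P}{\mathcal{B}_0}=\mathfrak{T}^{\sharp}(I_L)(\cdots\mathfrak{T}^{\sharp}(I_1)(\mathcal{B}_0)\cdots),
\]
exactly as for the blocks $P_t$ and $P_f$ in \Cref{sec:method-transfer-ifthenelse}. The concrete side $\ctrans{P}{\mathfrak{S}_0}$ is the analogous composition of the concrete transitions $\mathfrak{T}(I_k)$.

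\emph{Base case, $L=0$.} Both semantics are the identity, so the claim is exactly the hypothesis $\mathfrak{S}_0\models\mathcal{B}_0$.

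\emph{Inductive step.} Write $P=P'\cdot I_L$, where $P'$ has length $L-1$. Set $\mathfrak{S}'=\ctrans{P'}{\mathfrak{S}_0}$ and $\mathcal{B}'=\atrans{P'}{\mathcal{B}_0}$. The induction hypothesis gives $\mathfrak{S}'\models\mathcal{B}'$. Lemma~\ref{lem:instr-sound} applied to $I_L$ then gives $\ctrans{I_L}{\mathfrak{S}'}\models\atrans{I_L}{\mathcal{B}'}$, which is the statement for $P$. No property of $\models$ beyond its definition as $\mathfrak{S}\subseteq\Gamma(\mathcal{B})$ is needed, since each step produces exactly the premise of the next.

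\emph{Nested conditionals.} This is the one subtle point. The conditional case of Lemma~\ref{lem:instr-sound} itself uses soundness for the blocks $P_t$ and $P_f$, which are shorter instruction sequences. So the argument must be a well-founded mutual induction rather than a flat one. I would run it on the syntactic size of programs, counting nested instructions inside conditional blocks. The conditional case then invokes the theorem for $P_t$ and $P_f$, which are strictly smaller. The flat-sequence argument above uses the lemma only for single instructions of size no larger than the program itself. This ordering makes the circular-looking dependency well founded.

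\emph{Inputs to the conditional case.} Within that case I would check that the restricted concrete sets satisfy $\{(c,\rho)\in\mathfrak{S}\mid\llbracket\varphi\rrbracket(c)=t\}\models\mathcal{B}_{\mathsf{t}}$, and likewise for $\mathcal{B}_{\mathsf{f}}$. This follows from Lemma~\ref{lem:guard-sound}. It supplies the hypothesis needed to apply the induction to each block.

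\emph{Join.} After the conditional, $\Gamma$ distributes over union, so the union of the two block results is represented by $\mathcal{B}_{\mathsf{join}}$. Lemma~\ref{lem:reduce-sound} then transfers representation through $\mathsf{RD}_{b_{\max}}$.
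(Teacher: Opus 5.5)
Your proposal is correct and follows the paper's proof: induction on the length of $P$, with Lemma~\ref{lem:instr-sound} supplying the step for the last instruction. Your simultaneous induction on syntactic size (counting nested instructions) makes explicit what the paper leaves implicit in the phrase ``by induction over their instruction sequences.'' This is a genuine tightening, because a flat induction on top-level length alone does not cover conditional blocks that are longer than the enclosing program.
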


\begin{proof}
By induction on the length of $P$. For the base case $P=\epsilon$, we have
$\ctrans{\epsilon}{\mathfrak{S}_0}=\mathfrak{S}_0$ and $\atrans{\epsilon}{\mathcal{B}_0}=\mathcal{B}_0$, hence
$\ctrans{\epsilon}{\mathfrak{S}_0}\models \atrans{\epsilon}{\mathcal{B}_0}$.
For the inductive step, write $P~=~P';I$ and assume
$\ctrans{P'}{\mathfrak{S}_0}\models \atrans{P'}{\mathcal{B}_0}$.
By definition of sequence semantics,
$\ctrans{P}{\mathfrak{S}_0}=\ctrans{I}{\ctrans{P'}{\mathfrak{S}_0}}$ and
$\atrans{P}{\mathcal{B}_0}=\atrans{I}{\atrans{P'}{\mathcal{B}_0}}$.
Applying Lemma~\ref{lem:instr-sound} to $I$ we have
\(
  \ctrans{I}{\ctrans{P'}{\mathfrak{S}_0}}
  \models
  \atrans{I}{\atrans{P'}{\mathcal{B}_0}},
\)
which is exactly $\ctrans{P}{\mathfrak{S}_0}\models \atrans{P}{\mathcal{B}_0}$.
\end{proof}

 

\subsubsection{Correctness of simplification rules}
\label{sec:correctness-simpl}

We show that the circuit simplifications of
\Cref{sec:method-optim-control,sec:method-el-red-op,sec:method-optim-if}
preserve the concrete semantics. A rewrite $P\rightarrow P'$ is \emph{sound under} $\mathcal{B}$ if for
every concrete classical--quantum state $\mathfrak{S}$ such that $\mathfrak{S}\models \mathcal{B}$ we have
\(
  \ctrans{P}{\mathfrak{S}} \;=\; \ctrans{P'}{\mathfrak{S}}.
\)

\begin{proof}[Proof sketch]
Fix a program point with abstract state $\mathcal{B}$, and let $\mathfrak{S}$ be any concrete state such
that $\mathfrak{S}\models \mathcal{B}$. We prove $\ctrans{P}{\mathfrak{S}}=\ctrans{P'}{\mathfrak{S}}$ by cases
on the applied rewrite rule.

\begin{itemize}
  \item \emph{Control reduction.}
  The rule is applied only if the corresponding control condition is redundant or unsatisfiable in every
  branch of $\mathcal{B}$. Since $\mathfrak{S}\models\mathcal{B}$, every concrete branch is covered by some
  abstract branch in $\mathcal{B}$ and therefore satisfies the same control fact. Hence, the rewriting does not change the concrete semantics.

  \item \emph{Elimination of semantically redundant operations.}
  A unitary gate, reset, or measurement is removed only when $\mathcal{B}$ is sufficient to show that the
  instruction has no effect on any concrete branch represented by $\mathcal{B}$, i.e.,
  for all $(c,\rho)\in\Gamma(\mathcal{B})$ executing the instruction leaves $(c,\rho)$ unchanged. Therefore, removing the operation preserves the concrete successor set.

  \item \emph{Simplification of $\mathbf{if}(\varphi)\ \mathbf{then}\ P_t\ \mathbf{else}\ P_f$.}
  If $\mathcal{B}_{\mathsf{f}}=\emptyset$ (resp.\ $\mathcal{B}_{\mathsf{t}}=\emptyset$), then
  $\llbracket\varphi\rrbracket^\sharp(\gamma_C)=\mathsf{t}$ (resp.\ $\mathsf{f}$) for every branch in
  $\mathcal{B}$, and by Lemma~\ref{lem:guard-sound} the \emph{els}e (resp.\ \emph{then}) block is unreachable on all
  concrete executions compatible with $\mathcal{B}$. Hence, the conditional is equivalent to $P_t$
  (resp.\ $P_f$).
\end{itemize}

In all cases, the rewrite preserves the concrete successor set, hence $\ctrans{P}{\mathfrak{S}}=\ctrans{P'}{\mathfrak{S}}$.
\end{proof}

\subsection{Complexity analysis}
Consider a dynamic circuit over $n$ qubits and $m$ classical bits. Let $g$ denote the number of instructions and $J$ the number of conditionals. We first recall the complexity of QCP, which serves as the baseline for our analysis \cite{qcp}.
QCP bounds the number of tracked basis states per entanglement group by a fixed constant $n_{\max}$. As a result, updating the abstract quantum state for a single gate takes $O(n)$ time,
and the overall algorithm over $g$ instructions runs in $O(g\cdot n)$ time.
In terms of space, QCP stores at most $n_{\max}$ amplitudes for each tracked group and at most $n$ groups overall, giving a space complexity of $O(n)$, assuming that $n_{\max}$ is a constant.
In BQCP, the same QCP update is applied independently to each branch of an abstract state
$\mathcal{B}$ with $|\mathcal{B}|\le b_{\max}$. Hence, omitting the constant $n_{\max}$, state propagation requires $O(b_{\max}\cdot g\cdot n)$ time.
In addition, manipulating the classical state incurs an
additional cost of $O(m)$ per branch. Let $\ell$ denote the
maximum syntactic size of a guard. Since guards are evaluated
compositionally on their syntax tree, guard evaluation costs
$O(\ell)$ per branch. This results in an additional
$O(b_{\max}\cdot g\cdot m + b_{\max}\cdot J\cdot \ell)$ term.
The applicability of our simplification rules is checked independently on each branch using only the abstract information computed during propagation; therefore, these checks do not affect the overall asymptotic complexity.
Finally, each join applies $\mathsf{RD}_{b_{\max}}$, which performs at most $b_{\max}$ conservative merges. Each merge combines $m$ classical bits and at most $n$ groups of fixed size $n_{\max}$, and therefore costs $O(m+n)$ time. Hence, the total overhead due to joins is $O(J\cdot b_{\max}\cdot (m+n))$.
Overall, the running time is
\(O(b_{\max}\cdot g\cdot n
+ b_{\max}\cdot g\cdot m
+ b_{\max}\cdot J\cdot \ell
+ b_{\max} \cdot J \cdot (n+m)).\)
Since $J \le g$, this simplifies to
\(O(b_{\max}\cdot g\cdot (n+m+\ell)).\)
The abstract state representation requires
$O(b_{\max}\cdot (n\cdot n_{\max}+m))$ space. Under the assumption that $b_{\max}$ and $n_{\max}$ are fixed constants, this yields an overall complexity of $O(g\cdot (n+m+\ell))$ time and $O(n+m)$ space.

\section{Evaluation}
We now evaluate the effectiveness of BQCP on optimizing dynamic circuits. We consider two sets of experiments: one on a real-world use case and one on random circuits. We assess the impact of BQCP on circuit simplification and analyze its execution time.
\subsection{Circuit cutting case study}
\label{subsec:eval-cutting}
Circuit cutting is a technique for decomposing a large quantum circuit into smaller subcircuits that can be executed independently and whose results are later combined to reconstruct the original computation. 
Two main circuit cutting paradigms are commonly considered in the literature: wire cutting \cite{wire-cutting} and gate cutting \cite{gate_cutting_alg_qdislib}. The gate cutting algorithm proposed in~\cite{gate_cutting_alg_qdislib} generates subcircuits that rely on mid-circuit measurements and conditionals to implement the cut procedure. In practice, circuit cutting frameworks apply general rules that correctly implement the cutting procedure for a broad class of circuits, but do not always produce locally optimized subcircuits. In this case study, we use BQCP to optimize the dynamic subcircuits generated by the cutting procedure, by removing redundant operations at compile time, including mid-circuit measurements and conditionals.
\begin{table}
\caption{Circuit cutting case study: number of operations across all subcircuits, grouped by operation type, reported as Pass \cite{ibmHoareOptimizerlatest} / QCP $n_{\max}=512$ / BQCP $n_{\max}=512$, $b_{\max}=4$.}
\label{tab:subcircuits_qiskit_vs_cp8}
\centering
\scriptsize
\begin{tabular}{l|cccc}
\toprule
circuit & 1-qubit gates & 2-qubit gates & measurements & conditionals \\
\midrule
GHZ-4 & 164 / 228 / 116 & 12 / 12 / 0 & 48 / 48 / 16 & 48 / 48 / 12 \\
QAOA-6 & 840 / 840 / 756 & 360 / 360 / 192 & 48 / 48 / 48 & 48 / 48 / 40 \\
BV-7 & 744 / 768 / 688 & 180 / 180 / 160 & 48 / 48 / 44 & 48 / 48 / 24 \\
\bottomrule
\end{tabular}
\end{table}

\subsubsection{Experimental setup}
To perform circuit cutting, we use the \texttt{Qdislib} library~\cite{Qdislib}, which implements the gate-cutting algorithm proposed in~\cite{gate_cutting_alg_qdislib} and generates subcircuits containing measurements and conditionals.
We consider three representative input circuits drawn from the literature:
\begin{itemize}
    \item GHZ-4, a $4$-qubit circuit for preparing a GHZ state~\cite{Nielsen_Chuang_2010};
    \item QAOA-6, a $6$-qubit QAOA instance with depth $p=1$ on a ring graph, with an initial layer of Hadamard gates, followed by nearest-neighbor $CX-RZ(\gamma)-CX$ interactions along the ring, and a final layer of $RX(\beta)$ rotations, with $\gamma=0.7$ and $\beta=0.4$~\cite{qaoa};
    \item BV-7, a Bernstein--Vazirani circuit on $7$ data qubits and one ancilla qubit, with an all-ones secret string~\cite{bv}.
\end{itemize}
For each input circuit, we apply two gate-cut operations. The resulting number of subcircuits grows exponentially with the number of cuts, scaling as $O(6^k)$ for $k$ cuts~\cite{gate_cutting_alg_qdislib}.

\subsubsection{Results}
We apply BQCP to each subcircuit obtained from the cutting procedure and compare its optimization impact against the Qiskit \texttt{HoareOptimizer} pass~\cite{ibmHoareOptimizerlatest}, based on the compile-time optimization method of~\cite{assertion-based}, and against QCP. \Cref{tab:subcircuits_qiskit_vs_cp8} reports, for each input circuit, the number of operations aggregated over all resulting subcircuits. BQCP achieves larger gate reductions overall, and it is the only method that also removes measurements and conditionals.

\subsection{Evaluation on random circuits}
\label{subsec:eval-random}
To evaluate our method on a broader range of dynamic circuit patterns, we used a synthetic dataset of random circuits.
\subsubsection{Dataset generation}
To generate our dataset, we extended Qiskit's random circuit generator \cite{ibmRandom}.
In the original version, the generated dynamic components follow a fixed pattern: all qubits are measured together, and each conditional is a \emph{then} block of one gate. Our extended version produces more varied patterns by measuring a random subset of qubits, generating instruction blocks of variable size, and randomly inserting reset operations.
In our dataset, each conditional has a \textit{then} block of up to $10$ operations and an optional \textit{else} block of up to $10$ operations. We generated circuits of different sizes to evaluate how the pass scales with circuit size.
Each circuit is parameterized by a size parameter $\mathit{dim}\in\{1,\dots,8\}$: it consists of $10\times \mathit{dim}$ qubits and has depth $50\times \mathit{dim}$, excluding measurements and the instructions inside conditional blocks. For each value of $\mathit{dim}$, we generated $10$ independent circuits.

\subsubsection{Optimization impact}
\begin{figure}[t]
  \centering
    \resizebox{0.7\linewidth}{!}{\input{images/avg_total_ops_mb8.pgf}}
  \caption{Random circuits: Comparison of the mean number of operations (unitary, measurement, reset, and conditional) for the original unoptimized circuits and the circuits obtained after applying QCP (with $n_{\max}=512$) and BQCP (with $n_{\max}=512$, $b_{\max}=8$), as a function of the size parameter $\mathit{dim}$. Results are averaged over $10$ circuits for each $\mathit{dim}$.}
  \label{fig:tot_ops}
\end{figure}
\begin{table}[t]
\caption{Random circuits: mean multi-qubit gates reported as QCP $n_{\max}=512$ / BQCP $n_{\max}=512$, $b_{\max}=4$.}
\centering
\scriptsize
\begin{tabular}{l|ccc}
\toprule
$dim$ & 2-qubit gates & 3-qubit gates & 4-qubit gates \\
\midrule
2 & 336.7 / 321.0 & 206.9 / 188.6 & 91.9 / 79.5 \\
4 & 1254.3 / 1192.9 & 788.6 / 732.6 & 345.7 / 311.8 \\
6 & 2765.0 / 2662.3 & 1706.9 / 1608.2 & 794.1 / 733.4 \\
8 & 4799.5 / 4619.7 & 2956.5 / 2786.5 & 1408.3 / 1293.2 \\
\bottomrule
\end{tabular}
\label{tab:ops_by_qubits_qcp_vs_bqcp}
\end{table}

We evaluate the effectiveness of our approach by comparing the number of operations in the original unoptimized circuits with those obtained after applying QCP and BQCP independently. \Cref{fig:tot_ops} shows the mean number of operations as a function of the circuit size parameter $\mathit{dim}$. This count includes unitary gates, measurements, resets, and conditionals.
Each conditional counts as one control-flow operation, in addition to the operations inside its \textit{then} and \textit{else} blocks, which are counted separately.
We observe that BQCP consistently outperforms QCP across all circuit sizes, leading to a larger reduction in the number of operations. This is because BQCP analysis supports not only unitary gates, but also dynamic-circuit operations such as measurements, resets, and conditionals. As a result, it can propagate more information and identify more opportunities for simplification. Moreover, by tracking different execution branches, BQCP can also simplify operations inside conditional blocks.
\Cref{tab:ops_by_qubits_qcp_vs_bqcp} shows the reduction in the number of multi-qubit gates achieved by BQCP compared to QCP. The table reports the mean number of operations in the circuits optimized by QCP and BQCP, grouped by gate arity. Across all circuit sizes, BQCP consistently produces circuits with fewer multi-qubit gates than QCP, indicating a larger reduction in multi-qubit interactions within the circuit.

\subsubsection{Runtime analysis}
We analyze the execution time of BQCP as a function of the parameter $b_{\max}$. \Cref{fig:ex_time} reports both the mean execution time and its standard deviation across circuit sizes.
The execution time increases with both the circuit size and the value of $b_{\max}$, reflecting the cost of propagating a larger number of execution branches.
We observed in practice that increasing $b_{\max}$ can enable additional simplifications, but the improvements in the number of removed operations are not large.
This is likely because random circuits quickly generate highly entangled states. Therefore, even if the analysis tracks information, superpositions of states lead to multiple possible values for each qubit, limiting
the opportunities for simplifications.
As a result, larger values of $b_{\max}$ mainly increase runtime while providing only incremental reductions. These results suggest that moderate $b_{\max}$ values offer a practical balance: they retain most of the optimization benefits enabled by branch-aware propagation while avoiding the higher runtime costs observed for large values of $b_{\max}$ on larger circuits.
\begin{figure}[t]
  \centering

  \subfloat[]{%
    \resizebox{0.495\linewidth}{!}{\input{images/avg_exec_time_mb_2-4-8-16.pgf}}%
    \label{subfig:ex_time_avg}%
  }\hfill
  \subfloat[]{%
    \resizebox{0.495\linewidth}{!}{\input{images/std_exec_time_mb_2-4-8-16.pgf}}%
    \label{subfig:ex_time_std}%
  }

  \caption{Random circuits: Mean execution time and standard deviation of BQCP with $n_{\max}=512$ for different values of $b_{\max}$, over $10$ circuits for each $\mathit{dim}$. \Cref{subfig:ex_time_avg} shows the mean execution time, and \Cref{subfig:ex_time_std} shows the standard deviation.}
  \label{fig:ex_time}
\end{figure}

\section{Conclusions and Future Works}
In this work, we introduced Branch-Aware Quantum Constant Propagation (BQCP), a compile-time optimization pass for dynamic quantum circuits. BQCP extends Quantum Constant Propagation (QCP) by tracking classical–quantum information across different execution branches, allowing to identify simplifications that are not captured by existing approaches not designed for dynamic circuits.
To ensure scalability, we proposed a bounded analysis that limits both the size of the tracked quantum state and the number of execution branches, providing a tunable trade-off between precision and cost. We formally proved the soundness of the analysis and of the simplifications.
Through experimental results we showed that BQCP achieves larger circuit reductions than existing optimization passes and standard QCP on dynamic circuits.

Several directions for future works remain open. 
First, the analysis could be extended to support richer dynamic control-flow constructs like loops or function calls.
Second, a possible direction is to investigate whether more refined strategies for managing the tracked branches could improve the precision of the analysis, for example through heuristics for merging branches that limit the loss of information.
Finally, it would be interesting to evaluate how integrating BQCP into compilation pipelines for quantum circuits affects the quality of the compiled circuits, analyzing how the circuits may become more efficient and more reliable when executed on real quantum hardware.

\section*{Acknowledgments}
The research is part of the Munich Quantum Valley (MQV), which is supported by the Bavarian state government with funds from the Hightech Agenda Bayern Plus.

We are grateful to Prof. Dr. Helmut Seidl for many fruitful discussions and his support at all times.

We used Codex (OpenAI) in the research artifacts to assist in the development of scripts for launching experimental runs and collecting experimental results. All AI-generated code were reviewed and validated by the authors.
\bibliographystyle{ieeetr}
\bibliography{references}

\end{document}